\newcolumntype{V}{!{\vrule width 1pt}}
\newtheorem{theorem}{Theorem}
\newtheorem{lemma}[theorem]{Lemma}
\newtheorem{definition}[theorem]{Definition}
\newtheorem{example}[theorem]{Example}
\newtheorem{remark}[theorem]{Remark}
\newtheorem{problem}[theorem]{Problem}
\newcommand{\gr}{Gr\"{o}bner }
\def\z{{\bf z}}
\def\F{{\mathbf{F}}}
\def\G{{\mathbf{G}}}
\begin{document}

\begin{frontmatter}

\title{On Minor Left Prime Factorization Problem for Multivariate Polynomial Matrices}

\author[baic,smbh]{Dong Lu}
\ead{donglu@buaa.edu.cn}

\author[klmm,ucas]{Dingkang Wang\corref{cor1}}
\ead{dwang@mmrc.iss.ac.cn}

\author[klmm,ucas]{Fanghui Xiao}
\ead{xiaofanghui@amss.ac.cn}

\cortext[cor1]{Corresponding author}

\address[baic]{Beijing Advanced Innovation Center for Big Data and Brain Computing, Beihang University, Beijing 100191, China}

\address[smbh]{School of Mathematical Sciences, Beihang University, Beijing 100191, China}

\address[klmm]{KLMM, Academy of Mathematics and Systems Science, Chinese Academy of Sciences, Beijing 100190, China}

\address[ucas]{School of Mathematical Sciences, University of Chinese Academy of Sciences, Beijing 100049, China}

\begin{abstract}
 A new necessary and sufficient condition for the existence of minor left prime factorizations of multivariate polynomial matrices without full row rank is presented. The key idea is to establish a relationship between a matrix and its full row rank submatrix. Based on the new result, we propose an algorithm for factorizing matrices and have implemented it on the computer algebra system Maple. Two examples are given to illustrate the effectiveness of the algorithm, and experimental data shows that the algorithm is efficient.
\end{abstract}

\begin{keyword}
 Multivariate polynomial matrices, Polynomial matrix factorizations, Minor left prime (MLP), \gr bases, Free modules
\end{keyword}
\end{frontmatter}

\section{Introduction}

 Multivariate polynomial matrix factorization is one of the most important operations in multidimensional systems, signal processing, and other related areas \citep{Bose1982,Bose2003}. The factorization problems of multivariate polynomial matrices have been extensively investigated and numerous algorithms have been developed to compute factorizations of multivariate polynomial matrices. Since the factorization problems have been solved for univariate and bivariate polynomial matrices \citep{Morf1977New,Guiver1982Polynomial,Liu2013New}, we only consider the case where the number of variables is greater than or equal to three.

 Using three important concepts proposed by \cite{Youla1979Notes}, there have been many publications studying matrix factorizations. \cite{Lin1999Notes} first proposed the existence problem for zero prime factorizations of multivariate polynomial matrices. \cite{Charoenlarpnopparut1999Multidimensional} first used \gr bases of modules to compute zero prime matrix factorizations of multivariate polynomial matrices. After that, \cite{Lin2008ATutorial} introduced some applications of \gr bases in the broad field of signals and systems. \cite{Lin2001A} put forward the famous Lin-Bose conjecture which was solved by \cite{Pommaret2001Solving,Wang2004On}. \cite{Mingsheng2005On} focused on the existence problem for minor prime factorizations of multivariate polynomial matrices, and gave a necessary and sufficient condition. \cite{Mingsheng2007On} designed an algorithm to compute factor prime factorizations of a class of multivariate polynomial matrices.

 In linear algebra as well as multidimensional systems, the factorization problems of multivariate polynomial matrices without full row rank are important and deserve some attention \citep{Youla1979Notes,Lin1999Notes}. Up to now, few results have been achieved on factorizations of multivariate polynomial matrices without full row rank \citep{Lin2001A,Guan2018,Guan2019}. Therefore, this paper focuses on factorization problems of multivariate polynomial matrices without full row rank. Motivated by the views in \cite{Lin2001A}, we try to use local properties to study the existence for minor prime factorizations of multivariate polynomial matrices without full row rank.

 The rest of the paper is organized as follows. In Section \ref{sec_PP}, we introduce some basic concepts and present the problem that we are considering. We present in Section \ref{sec_MR} a new necessary and sufficient condition for the existence of minor left prime factorizations of multivariate polynomial matrices without full row rank. In Section \ref{sec_AE}, we construct an algorithm based on the new result, and use two examples to illustrate the effectiveness of the algorithm. A comparison with Guan's algorithm and experimental data are presented in Section \ref{sec_edata}. We end with some concluding remarks in Section \ref{sec_conclusions}.

\section{Preliminaries and Problem}\label{sec_PP}

 Let $n$ be the number of variables, and $\z$ be the $n$ variables $z_1,\ldots,z_n$, where $n\geq 3$. Let $k[\z]$ be the polynomial ring in $\z$ over $k$, where $k$ is an algebraically closed field. Let $k[\z]^{l\times m}$ denote the set of $l\times m$ matrices with entries in $k[\z]$, where $l\leq m$. Let $\F\in k[\z]^{l\times m}$, we use $d_i(\F)$ to denote the greatest common divisor of all the $i\times i$ minors of $\F$, and $I_i(\mathbf{F})$ to represent the ideal generated by all the $i\times i$ minors of $\mathbf{F}$, where $1\leq i \leq l$ and we stipulate that $I_0(\mathbf{F}) = k[\z]$.

 We first recall the most important concept in the paper.

 \begin{definition}
  Let $\mathbf{F}\in k[\z]^{l\times m}$ be of full row rank. Then $\F$ is said to be an minor left prime (MLP) matrix if all the $l\times l$ minors of $\F$ are relatively prime, that is, $d_l(\mathbf{F})$ is a nonzero constant.
 \end{definition}

 Let $\mathbf{F}\in k[\z]^{m\times l}$ with $m\geq l$, an MRP matrix can be similarly defined. We refer to \cite{Youla1979Notes} for more details about the concepts of zero left prime (ZLP) matrices and factor left prime (FLP) matrices.

 An MLP factorization of a multivariate polynomial matrix is formulated as follows.

 \begin{definition}\label{matrix_factorization}
  Let $\mathbf{F}\in k[\z]^{l\times m}$ with rank $r$, where $1\leq r \leq l$. $\mathbf{F}$ is said to admit an MLP factorization if $\mathbf{F}$ can be factorized as
  \begin{equation}\label{gerneral-matirx-factorization}
   \mathbf{F} = \mathbf{G}_0\mathbf{F}_0
  \end{equation}
  such that $\mathbf{G}_0\in k[\z]^{l\times r}$, and $\mathbf{F}_0\in k[\z]^{r\times m}$ is an MLP matrix.
 \end{definition}

 When Youla and Gnavi studied the structure of $n$-dimensional linear systems, they obtained the following MLP factorization lemma by using matrix theory.

 \begin{lemma}\label{Youla-minor}
  Let $\mathbf{A} = \begin{bmatrix} \mathbf{A}_{11} & \mathbf{A}_{12} \\ \mathbf{A}_{21} & \mathbf{A}_{22} \end{bmatrix} \in k[\z]^{l\times m}$ with rank $r$, where $\mathbf{A}_{11}\in k[\z]^{r\times r}$ with ${\rm det}(\mathbf{A}_{11}) \neq 0$, $\mathbf{A}_{12}\in k[\z]^{r\times (m-r)}$, $\mathbf{A}_{21}\in k[\z]^{(l-r)\times r}$, $\mathbf{A}_{22}\in k[\z]^{(l-r)\times (m-r)}$, and $1\leq r \leq l$. If $[\mathbf{A}_{11} ~ \mathbf{A}_{12}]$ is an MLP matrix, then $\mathbf{A}_{21}\mathbf{A}_{11}^{-1}$ is a multivariate polynomial matrix and $\mathbf{A}$ has an MLP factorization
  \begin{equation}\label{Youla-minor-equ}
    \mathbf{A} = \begin{bmatrix}\mathbf{I}_{r\times r} \\ \mathbf{A}_{21}\mathbf{A}_{11}^{-1} \end{bmatrix}\begin{bmatrix}\mathbf{A}_{11} & \mathbf{A}_{12} \end{bmatrix}.
  \end{equation}
 \end{lemma}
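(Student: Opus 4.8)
The plan is to funnel the whole statement through a single ``clearing denominators'' argument that is where the MLP hypothesis is actually used. Write $\mathbf{H} = [\mathbf{A}_{11}~\mathbf{A}_{12}] \in k[\z]^{r\times m}$. Since $\mathbf{H}$ contains the nonsingular block $\mathbf{A}_{11}$ it has full row rank $r$, and since $\mathbf{H}$ consists of the first $r$ rows of $\mathbf{A}$ while $\mathrm{rank}(\mathbf{A}) = r$, the $k(\z)$-row spaces of $\mathbf{H}$ and of $\mathbf{A}$ coincide (containment plus equal dimension). In particular every row of $[\mathbf{A}_{21}~\mathbf{A}_{22}]$ is a $k(\z)$-linear combination of the rows of $\mathbf{H}$, so there is a unique $\mathbf{C} \in k(\z)^{(l-r)\times r}$ with $[\mathbf{A}_{21}~\mathbf{A}_{22}] = \mathbf{C}\mathbf{H}$; comparing the first $r$ columns gives $\mathbf{A}_{21} = \mathbf{C}\mathbf{A}_{11}$, i.e. $\mathbf{C} = \mathbf{A}_{21}\mathbf{A}_{11}^{-1}$. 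Thus the entire lemma reduces to showing that $\mathbf{C}$ has polynomial entries.

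For that I would argue as follows. For any set $J$ of $r$ columns with $\det \mathbf{H}_J \neq 0$ (writing $\mathbf{H}_J$ for the corresponding $r\times r$ submatrix of $\mathbf{H}$, and $\mathbf{L}_J$ for the submatrix of $[\mathbf{A}_{21}~\mathbf{A}_{22}]$ on those columns), restricting the identity $[\mathbf{A}_{21}~\mathbf{A}_{22}] = \mathbf{C}\mathbf{H}$ to the columns in $J$ yields $\mathbf{L}_J = \mathbf{C}\mathbf{H}_J$, hence $(\det\mathbf{H}_J)\,\mathbf{C} = \mathbf{L}_J\,\mathrm{adj}(\mathbf{H}_J) \in k[\z]^{(l-r)\times r}$. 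Fix an entry $c$ of $\mathbf{C}$ and write $c = p/q$ in lowest terms with $p,q\in k[\z]$. Then $q \mid (\det\mathbf{H}_J)\,p$ in $k[\z]$, and since $\gcd(p,q)=1$ and $k[\z]$ is a UFD we get $q \mid \det\mathbf{H}_J$. As $J$ was an arbitrary column set giving a nonzero $r\times r$ minor of $\mathbf{H}$, it follows that $q$ divides the gcd of all $r\times r$ minors of $\mathbf{H}$, that is $q \mid d_r(\mathbf{H})$; and $d_r(\mathbf{H})$ is a nonzero constant because $\mathbf{H}$ is MLP. Hence $q$ is a nonzero constant and $c\in k[\z]$, so $\mathbf{C} \in k[\z]^{(l-r)\times r}$. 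Since $\mathbf{C} = \mathbf{A}_{21}\mathbf{A}_{11}^{-1}$, this already proves that $\mathbf{A}_{21}\mathbf{A}_{11}^{-1}$ is a multivariate polynomial matrix.

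It then remains only to assemble the factorization. From $[\mathbf{A}_{21}~\mathbf{A}_{22}] = \mathbf{C}\mathbf{H}$ one has $\mathbf{C}\mathbf{A}_{11} = \mathbf{A}_{21}$ and $\mathbf{C}\mathbf{A}_{12} = \mathbf{A}_{22}$, so the block product $\begin{bmatrix}\mathbf{I}_{r\times r}\\ \mathbf{C}\end{bmatrix}\begin{bmatrix}\mathbf{A}_{11} & \mathbf{A}_{12}\end{bmatrix}$ equals $\begin{bmatrix}\mathbf{A}_{11} & \mathbf{A}_{12}\\ \mathbf{A}_{21} & \mathbf{A}_{22}\end{bmatrix} = \mathbf{A}$. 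The left factor lies in $k[\z]^{l\times r}$ and the right factor $[\mathbf{A}_{11}~\mathbf{A}_{12}]$ is MLP by hypothesis, so this is precisely the MLP factorization asserted in \eqref{Youla-minor-equ}, with $\mathbf{A}_{21}\mathbf{A}_{11}^{-1}$ in place of $\mathbf{C}$.

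As for difficulty: the linear-algebra parts (existence and uniqueness of $\mathbf{C}$ over $k(\z)$, and the final block multiplication) are routine, and the main obstacle is the denominator-clearing step in the second paragraph. The subtlety I would be careful about is that \emph{every} nonzero $r\times r$ minor of $[\mathbf{A}_{11}~\mathbf{A}_{12}]$ must be used to annihilate the denominator $q$: using only $\det\mathbf{A}_{11}$ would give merely $q \mid \det\mathbf{A}_{11}$, which is insufficient, whereas ranging over all $r\times r$ minors forces $q \mid d_r(\mathbf{H})$, at which point the MLP hypothesis delivers $q$ constant.
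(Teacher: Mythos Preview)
Your proof is correct. Note, however, that the paper does not actually prove this lemma: it is stated as a result of Youla and Gnavi (\emph{``When Youla and Gnavi studied the structure of $n$-dimensional linear systems, they obtained the following MLP factorization lemma by using matrix theory''}), so there is no in-paper proof to compare against. Your argument---passing to the field of fractions $k(\z)$ to obtain the unique rational matrix $\mathbf{C}=\mathbf{A}_{21}\mathbf{A}_{11}^{-1}$, then clearing denominators using \emph{every} nonzero $r\times r$ minor of $[\mathbf{A}_{11}~\mathbf{A}_{12}]$ to force the reduced denominator $q$ to divide $d_r(\mathbf{H})$---is the standard route and is fully correct as written; your closing remark about why $\det\mathbf{A}_{11}$ alone would not suffice is exactly the point where the MLP hypothesis enters.
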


 In order to state conveniently the problem of this paper, we introduce the following concepts and conclusions.

 \begin{definition}[\cite{Matsumura1989C}]
  Let $\mathcal{K}$ be a submodule of $k[\z]^{1\times m}$, and $J$ be a nonzero ideal of $k[\z]$. We define
  \begin{equation*}
   \mathcal{K} : J = \{ \vec{u}\in k[\z]^{1\times m} \mid J \vec{u} \subseteq \mathcal{K} \},
  \end{equation*}
  where $J \vec{u}$ is the set $\{f\vec{u} \mid f\in J\}$.
 \end{definition}

 Obviously, $\mathcal{K} \subseteq \mathcal{K} : J$. Let $\{f_1,\ldots, f_s\}\subset k[\z]$ be a \gr basis of $J$, then
 \begin{equation}\label{quotient-module-1}
   \mathcal{K} : J = \mathcal{K} :\langle f_1,\ldots,f_s\rangle = (\mathcal{K} : f_1) \cap \cdots \cap (\mathcal{K} : f_s).
 \end{equation}
 Here, we write $\mathcal{K} :\langle f \rangle$ as $\mathcal{K} : f$ for any $f\in k[\z]$.

\begin{definition}[\cite{Eisenbud2013}]
 Let $\mathcal{K}$ be a finitely generated $k[\z]$-module, and $k[\z]^{1\times l} \xlongrightarrow{\phi} k[\z]^{1\times m} \rightarrow \mathcal{K} \rightarrow 0$ be a presentation of $\mathcal{K}$, where $\phi$ acts on the right on row vectors, i.e., $\phi(\vec{u}) = \vec{u}\cdot\mathbf{F}$ for $\vec{u}\in k[\z]^{1\times l}$ with $\mathbf{F}$ being a presentation matrix corresponding to the linear mapping $\phi$. Then the ideal $Fitt_j(\mathcal{K}) = I_{m-j}(\mathbf{F})$ is called the $j$-th Fitting ideal of $\mathcal{K}$. Here, we make the convention that $Fitt_j(\mathcal{K}) = k[\z]$ for $j \geq m$, and that $Fitt_j(\mathcal{K}) = 0$ for $j < {\rm max}\{m-l,0\}$.
\end{definition}

 We remark that $Fitt_j(\mathcal{K})$ only depends on $\mathcal{K}$. \cite{Cox2005Using} showed that one obtains a presentation matrix of $\mathcal{K}$ by arranging a system of generators of the syzygy module of $\mathcal{K}$ as rows. Let $\mathbf{H}\in k[\z]^{m\times t}$ be composed of a system of generators of $\mathcal{K}$, then the syzygy module of $\mathcal{K}$ is defined as follows: ${\rm Syz}(\mathcal{K}) = \{ \vec{u}\in k[\z]^{1\times m} \mid \vec{u}\mathbf{H} = \vec{0} \}$.

 Let $\mathbf{F}\in k[\z]^{l\times m}$ with rank $r$, and $J$ be a nonzero ideal of $k[\z]$, where $1\leq r \leq l$. We use $\rho(\F)$ to denote the submodule of $k[\z]^{1\times m}$ generated by the rows of $\F$. \cite{Mingsheng2007On} and \cite{Guan2018} proved that the rank of $\rho(\F):J$ is $r$. Let $\mathbf{F}_1\in k[\z]^{s\times m}$ be composed of a system of generators of $\rho(\F):J$, and $\mathbf{F}_2\in k[\z]^{t\times s}$ be composed of a system of generators of ${\rm Syz}(\F_1)$, where $s\geq r$ and $t\geq s-r$. Then, $\mathbf{F}_2$ is a presentation matrix of $\rho(\F):J$. Moreover, $\rho(\F):J$ is a free module of rank $r$ if and only if $Fitt_r(\rho(\F_1))$ generates $k[\z]$, that is, $I_{s-r}(\mathbf{F}_2) = k[\z]$. We refer to \cite{Eisenbud2013} for more details.

 \cite{Mingsheng2005On} proposed a necessary and sufficient condition for MLP factorizations of multivariate polynomial matrices with full row rank.

 \begin{lemma}\label{minor-theorem-Wang}
   Let $\mathbf{F}\in k[\z]^{l\times m}$ be of full row rank. Then the following are equivalent:
   \begin{enumerate}
     \item $\F$ has an MLP factorization;

     \item $\rho(\F):d_l(\F)$ is a free module of rank $l$.
   \end{enumerate}
 \end{lemma}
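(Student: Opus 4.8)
The plan is to prove the two implications of the lemma separately, both via a single structural observation about the operation $\rho(\cdot):d_l(\cdot)$. For a full row rank matrix $\mathbf{F}\in k[\z]^{l\times m}$, write $\mathcal{S}(\mathbf{F})$ for the intersection of $k[\z]^{1\times m}$ with the $k(\z)$-linear span of the rows of $\mathbf{F}$ inside $k(\z)^{1\times m}$ (the ``saturation'' of $\rho(\mathbf{F})$). The first thing I would prove is that, for every full row rank $\mathbf{F}$,
\[
  \rho(\mathbf{F}):d_l(\mathbf{F}) = \mathcal{S}(\mathbf{F}).
\]
The inclusion ``$\subseteq$'' is immediate because $d_l(\mathbf{F})\neq 0$. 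For ``$\supseteq$'', take $\vec{u}\in\mathcal{S}(\mathbf{F})$ and write $\vec{u}=\vec{v}\mathbf{F}$ with $\vec{v}\in k(\z)^{1\times l}$ (uniquely, since the rows of $\mathbf{F}$ are $k(\z)$-independent). Applying Cramer's rule to each nonsingular $l\times l$ column submatrix of $\mathbf{F}$ shows that $\mu\vec{v}\in k[\z]^{1\times l}$ for every $l\times l$ minor $\mu$ of $\mathbf{F}$; expressing the coordinates of $\vec{v}$ in lowest terms and using that $k[\z]$ is a UFD, their denominators must divide the gcd $d_l(\mathbf{F})$ of all those minors, so $d_l(\mathbf{F})\vec{v}\in k[\z]^{1\times l}$ and hence $d_l(\mathbf{F})\vec{u}\in\rho(\mathbf{F})$. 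I record that $\mathcal{S}(\mathbf{F})$ depends only on the $k(\z)$-span of the rows of $\mathbf{F}$, so it is unchanged when $\mathbf{F}$ is replaced by $\mathbf{G}_0\mathbf{F}_0$ with $\mathbf{G}_0$ square and $\det(\mathbf{G}_0)\neq 0$.

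Next I would establish the characterization: a full row rank matrix $\mathbf{F}_0\in k[\z]^{l\times m}$ is MLP if and only if $\mathcal{S}(\mathbf{F}_0)=\rho(\mathbf{F}_0)$. If $d_l(\mathbf{F}_0)$ is a nonzero constant, then $\mathcal{S}(\mathbf{F}_0)=\rho(\mathbf{F}_0):d_l(\mathbf{F}_0)=\rho(\mathbf{F}_0)$ by the displayed identity. Conversely, suppose $d_l(\mathbf{F}_0)$ is a nonunit and let $p$ be an irreducible factor of it. Then $p$ divides every $l\times l$ minor of $\mathbf{F}_0$, so the image of $\mathbf{F}_0$ in $(k[\z]/(p))^{l\times m}$ has rank $<l$ over the field of fractions of the domain $k[\z]/(p)$. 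Clearing denominators in a nontrivial linear dependence of its rows produces $c_1,\ldots,c_l\in k[\z]$, not all divisible by $p$, and $\vec{w}\in k[\z]^{1\times m}$ with $\sum_{i=1}^{l}c_i\vec{r}_i=p\vec{w}$, where $\vec{r}_1,\ldots,\vec{r}_l$ are the rows of $\mathbf{F}_0$. Since these rows are a free basis of $\rho(\mathbf{F}_0)$, membership $\vec{w}\in\rho(\mathbf{F}_0)$ would force $p\mid c_i$ for all $i$; hence $\vec{w}\in\mathcal{S}(\mathbf{F}_0)\setminus\rho(\mathbf{F}_0)$, a contradiction. (This direction can alternatively be deduced from the standard characterizations of prime multivariate polynomial matrices.)

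With these two facts, $(1)\Rightarrow(2)$ is quick. If $\mathbf{F}=\mathbf{G}_0\mathbf{F}_0$ is an MLP factorization, then $\mathbf{G}_0$ is $l\times l$, $\mathbf{F}_0$ is MLP (hence of full row rank $l$), and $\det(\mathbf{G}_0)\neq 0$ by comparing ranks, since $\mathbf{F}$ has rank $l$. Therefore
\[
  \rho(\mathbf{F}):d_l(\mathbf{F})=\mathcal{S}(\mathbf{F})=\mathcal{S}(\mathbf{F}_0)=\rho(\mathbf{F}_0),
\]
and $\rho(\mathbf{F}_0)$ is free of rank $l$ because the $l$ rows of the full row rank matrix $\mathbf{F}_0$ form a basis. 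For $(2)\Rightarrow(1)$, choose a basis of the free rank-$l$ module $\rho(\mathbf{F}):d_l(\mathbf{F})$ and let $\mathbf{F}_0\in k[\z]^{l\times m}$ be the matrix with these rows; it has full row rank $l$. Since $\rho(\mathbf{F})\subseteq\rho(\mathbf{F}):d_l(\mathbf{F})=\rho(\mathbf{F}_0)$, there is $\mathbf{G}_0\in k[\z]^{l\times l}$ with $\mathbf{F}=\mathbf{G}_0\mathbf{F}_0$, and $\det(\mathbf{G}_0)\neq 0$ by a rank comparison, so $\mathbf{F}$ and $\mathbf{F}_0$ have the same $k(\z)$-span. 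Then $\rho(\mathbf{F}_0)=\rho(\mathbf{F}):d_l(\mathbf{F})=\mathcal{S}(\mathbf{F})=\mathcal{S}(\mathbf{F}_0)$, so $\mathbf{F}_0$ is MLP by the characterization, and $\mathbf{F}=\mathbf{G}_0\mathbf{F}_0$ is the desired MLP factorization.

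I expect the preparatory facts, rather than the implications, to be where the work lies. The delicate points are the denominator bookkeeping in the identity $\rho(\mathbf{F}):d_l(\mathbf{F})=\mathcal{S}(\mathbf{F})$ (one must check that it is the gcd $d_l(\mathbf{F})$, not merely the ideal $I_l(\mathbf{F})$, that clears all denominators), and the ``only if'' part of the characterization: showing that saturation of $\rho(\mathbf{F}_0)$ forces $d_l(\mathbf{F}_0)$ to be a unit — and not merely that $\mathbf{F}_0$ be factor left prime — which is precisely where reducing modulo an irreducible factor of $d_l(\mathbf{F}_0)$ and exploiting the freeness of $\rho(\mathbf{F}_0)$ are needed.
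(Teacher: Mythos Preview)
The paper does not give a proof of this lemma: it is quoted from \cite{Mingsheng2005On} as a known result and stated without argument, so there is no in-paper proof to compare against.

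Your argument is correct and self-contained. The identification $\rho(\mathbf{F}):d_l(\mathbf{F})=\mathcal{S}(\mathbf{F})$ is precisely the structural fact underlying Wang's original treatment, and your Cramer/UFD verification that the common denominator of the coefficient vector $\vec{v}$ divides the gcd $d_l(\mathbf{F})$ (and not merely lies in the ideal $I_l(\mathbf{F})$) is the right observation. The characterization ``$\mathbf{F}_0$ is MLP $\Leftrightarrow$ $\rho(\mathbf{F}_0)$ is saturated'' via reduction modulo an irreducible factor $p$ of $d_l(\mathbf{F}_0)$ is also sound: the key step, that $\vec{w}\in\rho(\mathbf{F}_0)$ would force $p\mid c_i$ for all $i$, uses exactly that the rows of a full row rank matrix over the domain $k[\z]$ are $k[\z]$-linearly independent, hence a free basis of $\rho(\mathbf{F}_0)$. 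The two implications $(1)\Leftrightarrow(2)$ then follow as you wrote them, the invariance $\mathcal{S}(\mathbf{G}_0\mathbf{F}_0)=\mathcal{S}(\mathbf{F}_0)$ for $\det(\mathbf{G}_0)\neq 0$ doing the work in both directions.
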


 \cite{Guan2019} generalized Lemma \ref{minor-theorem-Wang} to the case of multivariate polynomial matrices without full row rank.

 \begin{lemma}\label{minor-theorem-Guan}
   Let $\mathbf{F}\in k[\z]^{l\times m}$ with rank $r$, where $1\leq r \leq l$. Then the following are equivalent:
   \begin{enumerate}
     \item $\F$ has an MLP factorization;

     \item $\rho(\F):I_r(\F)$ is a free module of rank $r$.
   \end{enumerate}
 \end{lemma}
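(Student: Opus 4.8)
The plan is to pin down the module $\rho(\F):I_r(\F)$ explicitly and then read the factorization off a basis of it. The key is the identity, valid for every $\F\in k[\z]^{l\times m}$ of rank $r$,
\begin{equation}\label{eq:plan-sat}
 \rho(\F):I_r(\F)=\rho(\F):d_r(\F)=\{\vec u\in k[\z]^{1\times m}\mid f\vec u\in\rho(\F)\ \text{for some}\ 0\neq f\in k[\z]\},
\end{equation}
so that all three equal the saturation of $\rho(\F)$ in $k[\z]^{1\times m}$; this common module is in particular \emph{saturated}, since saturating twice changes nothing. I would establish \eqref{eq:plan-sat} first. If $f\vec u\in\rho(\F)$ with $f\neq0$, then $\vec u$ is a $k(\z)$-combination of the rows of $\F$; given any $r\times r$ minor $m$ of $\F$, either $m=0$ (whence $m\vec u=\vec 0\in\rho(\F)$ trivially) or $m$ comes from $r$ rows that are then $k(\z)$-independent together with some $r$ columns, and Cramer's rule applied to that nonsingular $r\times r$ submatrix $B$ (expressing the coefficient vector through $\mathrm{adj}(B)/m$) shows $m\vec u$ lies in the row module of those $r$ rows, hence in $\rho(\F)$; varying $m$ gives $I_r(\F)\vec u\subseteq\rho(\F)$. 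For $\rho(\F):I_r(\F)\subseteq\rho(\F):d_r(\F)$: write the $r\times r$ minors as $d_r(\F)a_1,\dots,d_r(\F)a_N$ with $\gcd(a_1,\dots,a_N)=1$, pick $g_j$ with $\sum_j g_ja_j=1$, and from $d_r(\F)a_j\vec u\in\rho(\F)$ deduce $d_r(\F)\vec u=\sum_j g_j\bigl(d_r(\F)a_j\vec u\bigr)\in\rho(\F)$. Finally $\rho(\F):d_r(\F)$ lies in the saturation because $d_r(\F)\neq0$, closing the chain.

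For $(2)\Rightarrow(1)$: suppose $M:=\rho(\F):I_r(\F)$ is free of rank $r$ and let $\F_0\in k[\z]^{r\times m}$ have rows forming a basis of $M$. Then $\F_0$ has full row rank $r$, and $\rho(\F)\subseteq M=\rho(\F_0)$ gives $\F=\G_0\F_0$ for some $\G_0\in k[\z]^{l\times r}$. It remains to show $\F_0$ is MLP, i.e. $d_r(\F_0)$ is a unit. If not, take an irreducible factor $p$ of $d_r(\F_0)$; then $p$ divides every $r\times r$ minor of $\F_0$, so the rows of $\F_0$ are linearly dependent modulo $p$ and there is $\vec\mu\in k[\z]^{1\times r}$ with $p\nmid\vec\mu$ and $\vec\mu\F_0\equiv\vec 0\pmod p$. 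Hence $\vec w:=\tfrac1p\vec\mu\F_0\in k[\z]^{1\times m}$ satisfies $p\vec w\in\rho(\F_0)=M$ but $\vec w\notin M$ (full row rank of $\F_0$ would force $\tfrac1p\vec\mu\in k[\z]^{1\times r}$), contradicting that $M$ is saturated. So $\F_0$ is MLP and $\F=\G_0\F_0$ is an MLP factorization.

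For $(1)\Rightarrow(2)$: let $\F=\G_0\F_0$ be an MLP factorization, so $\F_0\in k[\z]^{r\times m}$ is MLP and therefore of full row rank $r$. Since $d_r(\F_0)$ is a unit, \eqref{eq:plan-sat} applied to $\F_0$ shows $\rho(\F_0)$ equals its own saturation. Now $\rho(\F)\subseteq\rho(\F_0)$ and both have rank $r$, so $\rho(\F_0)/\rho(\F)$ is torsion and every row of $\F_0$ lies in the saturation of $\rho(\F)$; by \eqref{eq:plan-sat} this gives $\rho(\F_0)\subseteq\rho(\F):I_r(\F)$, while the reverse inclusion follows from $\rho(\F)\subseteq\rho(\F_0)$ and the saturatedness of $\rho(\F_0)$. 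Thus $\rho(\F):I_r(\F)=\rho(\F_0)$, which is free of rank $r$ because the $r$ linearly independent rows of $\F_0$ form a basis of it.

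The main obstacle is \eqref{eq:plan-sat}: the Cramer's-rule inclusion and the coprimality step for the reduced maximal minors are the technical core, and the sub-claim in $(2)\Rightarrow(1)$ that $d_r(\F_0)$ not being a unit forces the saturation to be strictly larger (the reduction modulo an irreducible factor, using that $k[\z]$ is a UFD) is the only genuinely non-formal point; everything else is bookkeeping with ranks and torsion. Alternatively, once \eqref{eq:plan-sat} is available one can bypass the MLP verification by choosing a full-row-rank submatrix $\F'$ of $\F$ (same rank $r$), noting $\rho(\F):I_r(\F)=\rho(\F'):d_r(\F')$ since both equal the common saturation of $\rho(\F')\subseteq\rho(\F)$, and invoking Lemma~\ref{minor-theorem-Wang} for $\F'$ together with the elementary fact that $\F$ and $\F'$ admit MLP factorizations simultaneously.
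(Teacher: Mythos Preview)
The paper does not prove this lemma; it is quoted from \cite{Guan2019} as prior work (the paper's own contribution is Theorem~\ref{main-theorem}). So there is no in-paper proof to compare against, and your argument must stand on its own.

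There is one genuine error, though it turns out not to be fatal. The middle equality in \eqref{eq:plan-sat}, $\rho(\F):I_r(\F)=\rho(\F):d_r(\F)$, is \emph{false} when $r<l$. Your argument for it---write the reduced minors with $\gcd(a_1,\dots,a_N)=1$ and ``pick $g_j$ with $\sum_j g_ja_j=1$''---is the classic confusion of \emph{coprime} with \emph{comaximal}: in $k[\z]$ with $n\ge2$, $z_1$ and $z_2$ are coprime but $\langle z_1,z_2\rangle\neq k[\z]$, so no such $g_j$ exist. Concretely, for the rank-$1$ matrix
\[
\F=\begin{bmatrix} z_1 & z_1z_2\\ z_2 & z_2^2\end{bmatrix}\in k[\z]^{2\times 2}
\]
one has $d_1(\F)=1$, $I_1(\F)=\langle z_1,z_2\rangle$, and $\rho(\F)=\langle z_1,z_2\rangle\cdot(1,z_2)$, so $\rho(\F):d_1(\F)=\rho(\F)$ while $(1,z_2)\in\rho(\F):I_1(\F)\setminus\rho(\F)$. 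The equality $\rho(\F):I_l(\F)=\rho(\F):d_l(\F)$ that the paper's Remark attributes to Guan et~al.\ is specifically the full-row-rank case $r=l$, where the denominator-ideal argument does go through.

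Fortunately your proof of the lemma never actually uses the middle term. Both implications rely only on the outer identity $\rho(\F):I_r(\F)=\{\vec u:\exists f\neq0,\ f\vec u\in\rho(\F)\}$, which your Cramer argument establishes correctly; in $(1)\Rightarrow(2)$ you invoke \eqref{eq:plan-sat} only for the full-row-rank matrix $\F_0$, where it \emph{is} valid. So drop $\rho(\F):d_r(\F)$ from \eqref{eq:plan-sat} and the proof stands. Your closing alternative---pass to a full-row-rank submatrix $\F'$, observe $\rho(\F):I_r(\F)=\rho(\F'):d_r(\F')$ as both equal the common saturation, and invoke Lemma~\ref{minor-theorem-Wang}---is correct and is precisely the viewpoint behind the paper's Theorem~\ref{main-theorem}.
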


\begin{remark}
 Although Lemma \ref{minor-theorem-Guan} is different from Lemma \ref{minor-theorem-Wang} for the case of $r=l$, Guan et al. have proved that $\rho(\F):I_l(\F) = \rho(\F):d_l(\F)$.
\end{remark}

 Let $a_1,\ldots, a_\beta\in k[\z]$ be all the $r\times r$ minors of $\F$, then $I_r(\F) = \langle a_1,\ldots, a_\beta \rangle$, where $\beta = \binom{l}{r} \cdot \binom{m}{r}$. From Equation (\ref{quotient-module-1}) we have
  \begin{equation}\label{Guan-module-equ-1}
   \rho(\F) : I_r(\F) = (\rho(\F) : a_1) \cap \cdots \cap (\rho(\F) : a_\beta).
 \end{equation}
 When we verify whether $\rho(\F):I_r(\F)$ is a free module of rank $r$, we need to do the following calculation. First, we compute a \gr basis $\{\bar{a}_1,\ldots, \bar{a}_\gamma\}$ of $I_r(\F)$, where $\gamma \leq \beta$. Then,
 \begin{equation}\label{Guan-module-equ-2}
   \rho(\F) : I_r(\F) = (\rho(\F) : \bar{a}_1) \cap \cdots \cap (\rho(\F) : \bar{a}_\gamma).
 \end{equation}
 Second, we obtain a system $\mathcal{G}_i$ of generators of $\rho(\F) : \bar{a}_i$ by computing a \gr basis of a corresponding module (we refer to Section \ref{sec_AE} for more details), where $i =1,\ldots,\gamma$. Third, we compute a \gr basis $\mathcal{G}$ of $\mathcal{G}_1\cap \cdots \cap \mathcal{G}_\gamma$. Finally, we compute a \gr basis of the $r$-th Fitting ideal of the module generated by the elements in $\mathcal{G}$.

 As we all know, the method of computing a \gr basis of the intersection of modules is to introduce new variables. Given that the complexity of \gr basis computations is heavily influenced by the number of variables and the total degrees of polynomials \citep{mayr82,mora1984}, it can be seen that the calculation amount of $\rho(\F):I_r(\F)$ is very large. Therefore, we consider the following problem.

 \begin{problem}\label{main-problem-1}
  Is there a simpler condition that can replace $\rho(\F) : I_r(\F)$ in Lemma \ref{minor-theorem-Guan}?
 \end{problem}

\section{Main Result}\label{sec_MR}

 Let $\mathbf{F}\in k[\z]^{l\times m}$ with rank $r$, where $1\leq r \leq l$. We use Lemma \ref{Youla-minor} to establish a relationship between $\F$ and an arbitrary full row rank submatrix of $\F$, and then solve Problem \ref{main-problem-1}.

 \begin{theorem}\label{main-theorem}
   Let $\mathbf{F}\in k[\z]^{l\times m}$ with rank $r$, and $\F_1 \in k[\z]^{r\times m}$ be an arbitrary full row rank submatrix of $\F$, where $1\leq r \leq l$. Then the following are equivalent:
   \begin{enumerate}
     \item $\F$ has an MLP factorization;

     \item $\rho(\F_1):d_r(\F_1)$ is a free module of rank $r$.
   \end{enumerate}
 \end{theorem}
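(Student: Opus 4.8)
The plan is to reduce the rank-$r$ case to the full-row-rank Lemma~\ref{minor-theorem-Wang} applied to the submatrix $\F_1$, by showing that $\F$ has an MLP factorization if and only if $\F_1$ does, and that any MLP factor of $\F_1$ also serves as an MLP factor of $\F$.

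First I would establish the direction (2)$\Rightarrow$(1). Assume $\rho(\F_1):d_r(\F_1)$ is free of rank $r$. By Lemma~\ref{minor-theorem-Wang}, $\F_1$ admits an MLP factorization $\F_1 = \G_1 \F_0$ with $\G_1 \in k[\z]^{r\times r}$ and $\F_0 \in k[\z]^{r\times m}$ an MLP matrix. Since $\F_1$ has full row rank, $\det(\G_1)\neq 0$. Now I want to produce a polynomial matrix $\G_0\in k[\z]^{l\times r}$ with $\F = \G_0\F_0$. The rows of $\F$ not already in $\F_1$ lie in the row space of $\F$ over $k(\z)$, and since $\F_1$ has full row rank $r$, each such row is a $k(\z)$-combination of the rows of $\F_1$, hence of the rows of $\F_0$ (as $\G_1$ is invertible over $k(\z)$); writing these combinations as the rows of a rational matrix $\G_0$ with $\F = \G_0\F_0$, I must check $\G_0$ is actually polynomial. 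This is exactly where Lemma~\ref{Youla-minor} enters: after permuting rows and columns so that $\F_0$ contains an $r\times r$ block $\mathbf{A}_{11}$ with $\det(\mathbf{A}_{11})\neq 0$ (possible since $\F_0$ is MLP of full row rank), the corresponding block structure of $\F$ has $[\mathbf{A}_{11}\ \mathbf{A}_{12}]$ equal to (a permutation of) $\F_0$, which is MLP; Lemma~\ref{Youla-minor} then gives that $\mathbf{A}_{21}\mathbf{A}_{11}^{-1}$ is polynomial and $\F = \bigl[\begin{smallmatrix}\mathbf{I}\\ \mathbf{A}_{21}\mathbf{A}_{11}^{-1}\end{smallmatrix}\bigr]\F_0$ up to the permutations, which is an MLP factorization of $\F$. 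So $\G_0 = \bigl[\begin{smallmatrix}\mathbf{I}\\ \mathbf{A}_{21}\mathbf{A}_{11}^{-1}\end{smallmatrix}\bigr]$ after undoing the column/row permutations (and absorbing the row permutation into $\G_0$, the column permutation being harmless for the MLP property of $\F_0$).

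For (1)$\Rightarrow$(2), suppose $\F = \G_0\F_0$ is an MLP factorization, so $\F_0\in k[\z]^{r\times m}$ is MLP and $\G_0\in k[\z]^{l\times r}$. Restricting to the $r$ rows that form $\F_1$, I get a selection matrix (an $r\times l$ submatrix of the permutation that picks out those rows), and correspondingly $\F_1 = \G_1\F_0$ where $\G_1\in k[\z]^{r\times r}$ is the corresponding $r\times r$ submatrix of $\G_0$. Since $\F_1$ has full row rank $r$, $\det(\G_1)\neq 0$, so this is an MLP factorization of $\F_1$ (the second factor $\F_0$ is unchanged and still MLP). By Lemma~\ref{minor-theorem-Wang} applied to $\F_1$, $\rho(\F_1):d_r(\F_1)$ is then a free module of rank $r$, which is (2). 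The main technical content — and the step I expect to be the real obstacle — is the polynomiality of $\G_0$ in (2)$\Rightarrow$(1): one must argue carefully that the MLP-ness of $\F_0$ together with the fact that every row of $\F$ is a $k(\z)$-combination of the rows of $\F_0$ forces those combinations to have polynomial coefficients, and the cleanest route is to set up the block decomposition so that Lemma~\ref{Youla-minor} applies directly rather than re-deriving it. A secondary subtlety is bookkeeping with the row and column permutations relating $\F$, $\F_1$, and the block form required by Lemma~\ref{Youla-minor}, together with the observation that $d_r$ and the MLP property are invariant under such permutations.
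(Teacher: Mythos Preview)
Your approach matches the paper's: both directions reduce to Lemma~\ref{minor-theorem-Wang} applied to the full-row-rank submatrix $\F_1$, and in (2)$\Rightarrow$(1) the extension from $\F_1$ to $\F$ is carried out via Lemma~\ref{Youla-minor} together with row/column permutations.

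One slip to fix in your (2)$\Rightarrow$(1): Lemma~\ref{Youla-minor} cannot be applied to (a permutation of) $\F$ directly, because after putting $\F_1$ in the top $r$ rows the block $[\mathbf{A}_{11}\ \mathbf{A}_{12}]$ equals a column permutation of $\F_1$, which need not be MLP. As in the paper, one instead applies Lemma~\ref{Youla-minor} to the auxiliary rank-$r$ matrix $\bigl[\begin{smallmatrix}\F_0\\ \mathbf{C}\end{smallmatrix}\bigr]$ (with $\mathbf{C}$ the remaining $l-r$ rows of $\F$), whose top block \emph{is} the MLP matrix $\F_0$; this yields $\mathbf{C}=\mathbf{A}_{21}\mathbf{A}_{11}^{-1}\F_0$ with polynomial $\mathbf{A}_{21}\mathbf{A}_{11}^{-1}$, and hence the factorization reads $\F=\mathbf{U}^{-1}\bigl[\begin{smallmatrix}\G_1\\ \mathbf{A}_{21}\mathbf{A}_{11}^{-1}\end{smallmatrix}\bigr]\F_0$ with $\G_1$ (not $\mathbf{I}_r$) in the top block.
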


 \begin{proof}
   $1\rightarrow 2$. Suppose $\F$ has an MLP factorization. Then there exist $\mathbf{G}_0\in k[\z]^{l\times r}$ and $\mathbf{F}_0\in k[\z]^{r\times m}$ such that $\mathbf{F} = \mathbf{G}_0\mathbf{F}_0$ with $\mathbf{F}_0$ being an MLP matrix. Without loss of generality, we assume that the first $r$ rows of $\mathbf{F}$ are $k[\z]$-linearly independent. Let $\F_1\in k[\z]^{r\times m}$ be composed of the first $r$ rows of $\mathbf{F}$, then
   \begin{equation}\label{Conjecture-equ-1}
     \F = \begin{bmatrix}\F_1 \\ \mathbf{C}\end{bmatrix}
        = \begin{bmatrix}\G_{01} \\ \G_{02}\end{bmatrix}\F_0,
   \end{equation}
   where $\G_{01} \in k[\z]^{r\times r}$ is the first $r$ rows of $\G_0$. From Equation (\ref{Conjecture-equ-1}) we have
   \begin{equation}\label{Conjecture-equ-2}
     \F_1 =  \G_{01}\F_0.
   \end{equation}
   According to Lemma \ref{minor-theorem-Wang}, $\rho(\F_1):d_r(\F_1)$ is a free module of rank $r$.

   $2\rightarrow 1$. Assume that $\rho(\F_1):d_r(\F_1)$ is a free module of rank $r$. Using Lemma \ref{minor-theorem-Wang}, there exist $\mathbf{G}_{11}\in k[\z]^{r\times r}$ and $\mathbf{F}_{11}\in k[\z]^{r\times m}$ such that $\mathbf{F}_1 = \mathbf{G}_{11}\mathbf{F}_{11}$ with $\mathbf{F}_{11}$ being an MLP matrix. Since $\F_1$ is an arbitrary $r\times m$ submatrix of $\F$, there exists an elementary transformation matrix $\mathbf{U}\in k^{l\times l}$ such that $\F_1$ is the first $r$ rows of $\bar{\F}$, where $\bar{\F} = \mathbf{U} \F$. Let $\bar{\F} = [\F_1^{\rm T} ~ \mathbf{C}^{\rm T}]^{\rm T}$, where $\mathbf{C}\in k[\z]^{(l-r)\times m}$ is the last $(l-r)$ rows of $\bar{\F}$. Then,
   \begin{equation}\label{Conjecture-equ-3}
     \bar{\F} = \mathbf{U} \F = \begin{bmatrix}\F_1 \\ \mathbf{C}\end{bmatrix}
        = \begin{bmatrix}\G_{11}\F_{11} \\ \mathbf{C}\end{bmatrix}
        = \begin{bmatrix}\G_{11} & \mathbf{0}_{r\times (l-r)} \\ \mathbf{0}_{(l-r)\times r}  & \mathbf{I}_{(l-r)\times (l-r)}\end{bmatrix}\begin{bmatrix}\F_{11} \\ \mathbf{C}\end{bmatrix}.
   \end{equation}
   Because $\F_{11}\in k[\z]^{r\times m}$ is a full row rank matrix, there exists another elementary transformation matrix $\mathbf{V}\in k^{m\times m}$ such that the first $r$ columns of $\bar{\F}_{11}$ are $k[\z]$-linearly independent, where $\bar{\F}_{11} = \F_{11}\mathbf{V}$. It follows from ${\rm det}(\mathbf{V}) = 1$ that $\bar{\F}_{11}\mathbf{V}^{-1} = \F_{11}$. According to the Binet-Cauchy formula, we obtain $d_r(\bar{\F}_{11}) \mid d_r(\F_{11})$. This implies that $d_r(\bar{\F}_{11})$ is a nonzero constant. Therefore, $\bar{\F}_{11}$ is an MLP matrix. Suppose that
   \begin{equation}\label{Conjecture-equ-4}
          \begin{bmatrix}\F_{11} \\ \mathbf{C}\end{bmatrix}\mathbf{V}
        = \begin{bmatrix}\mathbf{A}_{11} & \mathbf{A}_{12} \\ \mathbf{A}_{21}  & \mathbf{A}_{22}\end{bmatrix},
   \end{equation}
   where $\mathbf{A}_{11}\in k[\z]^{r\times r}$, $\mathbf{A}_{12}\in k[\z]^{r\times (m-r)}$, $\mathbf{A}_{21}\in k[\z]^{(l-r)\times r}$, and $\mathbf{A}_{22}\in k[\z]^{(l-r)\times (m-r)}$. Then, ${\rm det}(\mathbf{A}_{11}) \neq 0$ and $[\mathbf{A}_{11} ~ \mathbf{A}_{12}]$ is an MLP matrix. By Lemma \ref{Youla-minor}, we get
   \begin{equation}\label{Conjecture-equ-5}
     \begin{bmatrix}\F_{11} \\ \mathbf{C}\end{bmatrix}\mathbf{V}
     = \begin{bmatrix}\mathbf{I}_{r\times r} \\ \mathbf{A}_{21}\mathbf{A}_{11}^{-1} \end{bmatrix}\begin{bmatrix}\mathbf{A}_{11} & \mathbf{A}_{12} \end{bmatrix}
     = \begin{bmatrix}\mathbf{I}_{r\times r} \\ \mathbf{A}_{21}\mathbf{A}_{11}^{-1} \end{bmatrix} \bar{\F}_{11}.
   \end{equation}
   Combining Equation (\ref{Conjecture-equ-3}) and Equation (\ref{Conjecture-equ-5}), we have
   \begin{equation}\label{Conjecture-equ-6}
     \mathbf{U}\F \mathbf{V}=
        \begin{bmatrix}\G_{11} & \mathbf{0}_{r\times (l-r)} \\ \mathbf{0}_{(l-r)\times r}  & \mathbf{I}_{(l-r)\times (l-r)}\end{bmatrix}
        \begin{bmatrix}\mathbf{I}_{r\times r} \\ \mathbf{A}_{21}\mathbf{A}_{11}^{-1} \end{bmatrix} \bar{\F}_{11}
        =\begin{bmatrix}\G_{11} \\ \mathbf{A}_{21}\mathbf{A}_{11}^{-1} \end{bmatrix} \bar{\F}_{11}.
   \end{equation}
   As $\mathbf{U}$ and $\mathbf{V}$ are two elementary transformation matrices, from Equation (\ref{Conjecture-equ-6}) we can derive
   \begin{equation}\label{Conjecture-equ-7}
     \F =\mathbf{U}^{-1}\begin{bmatrix}\G_{11} \\ \mathbf{A}_{21}\mathbf{A}_{11}^{-1} \end{bmatrix} \bar{\F}_{11}\mathbf{V}^{-1}
        =\mathbf{U}^{-1}\begin{bmatrix}\G_{11} \\ \mathbf{A}_{21}\mathbf{A}_{11}^{-1} \end{bmatrix}\F_{11}.
   \end{equation}
   Let $\G_0 = \mathbf{U}^{-1}\begin{bmatrix}\G_{11} \\ \mathbf{A}_{21}\mathbf{A}_{11}^{-1} \end{bmatrix}$ and $\F_0 = \F_{11}$, then $\F = \G_0\F_0$. Thus, $\F$ has an MLP factorization, and the proof is completed.
 \end{proof}

 \begin{remark}
  Theorem \ref{main-theorem} is the same as Lemma \ref{minor-theorem-Wang} for the case of $r =l$.
 \end{remark}

 According to the proof process of sufficiency in Theorem \ref{main-theorem}, we can propose a new constructive algorithm to compute an MLP factorization of $\mathbf{F}$. We will introduce the new algorithm in detail in the following section.

\section{Algorithm and Examples}\label{sec_AE}

 Let $\mathbf{F}\in k[\z]^{l\times m}$ with rank $r$, and $\F_1 \in k[\z]^{r\times m}$ be an arbitrary full row rank submatrix of $\F$, where $1\leq r \leq l$. Suppose $\rho(\F_1):d_r(\F_1)$ is a free module of rank $r$, then $\F$ has an MLP factorization. Now, we need to design an algorithm to compute $\G_0\in k[\z]^{l\times r}$ and $\F_0\in k[\z]^{r\times m}$ such that $\F = \G_0\F_0$ with $\F_0$ being an MLP matrix.

 Computing free bases of free modules is a crucial step in the process of matrix factorizations. \cite{Fabianska2007Applications} first designed a Maple package, which is called QUILLENSUSLIN, to compute free bases of free modules. Based on this fact, we will implement our algorithm on Maple.

 We have two problems to solve. The first one is how to compute a system of generators of $\rho(\F_1):d_r(\F_1)$, and another one is how to compute $\G_{11}\in k[\z]^{r\times r}$ such that $\F_1 = \G_{11} \F_{11}$, where $\F_{11}$ is composed of a free basis of $\rho(\F_1):d_r(\F_1)$. We can use the commands ``quotient" and ``lift" on the computer algebra system Singular \citep{DGPS2016} to solve the two problems. However, we need to solve these problems on Maple.

 \cite{Mingsheng2005On} proved that there are one to one correspondences between the two modules: $\rho(\F_1):d_r(\F_1)$ and ${\rm Syz}([\F_1^{\rm T} ~~ -d_r(\F_1)\cdot \mathbf{I}_{m\times m}]^{\rm T})$. That is, we compute a \gr basis $\{ [\vec{g}_1,\vec{f}_1],\ldots,[\vec{g}_s,\vec{f}_s]\}$ of ${\rm Syz}([\F_1^{\rm T} ~~ -d_r(\F_1)\cdot \mathbf{I}_{m\times m}]^{\rm T})$, then $\{\vec{f}_1,\ldots,\vec{f}_s\}$ is a system of generators of $\rho(\F_1):d_r(\F_1)$, where $[\vec{g}_i,\vec{f}_i]\in k[\z]^{1\times (r+m)}$ and $i=1,\ldots,s$.

 Now, we solve the second problem. Let $\F_1$ be composed of $\{\vec{f}_1,\ldots,\vec{f}_r\}$ and $\F_{11}$ be composed of $\{\vec{h}_1,\ldots,\vec{h}_r\}$, where $\vec{f}_i,\vec{h}_j\in k[\z]^{1\times m}$ and $1\leq i,j \leq r$. It follows from $\rho(\F_1) \subset \rho(\F_{11})$ that $\vec{f}_i \in \langle \vec{h}_1,\ldots,\vec{h}_r \rangle$ for each $i$. According to the division algorithm in $k[\z]^{1\times m}$ \citep{Cox2005Using}, we use $\{\vec{h}_1,\ldots,\vec{h}_r\}$ to reduce $\vec{f}_i$ and obtain the following equation:
 \begin{equation}\label{DFT-equ-1}
   \vec{f}_i = a_{i1}\vec{h}_1 + \cdots + a_{ir}\vec{h}_r + \vec{v}_i, ~ i=1,\ldots,r,
 \end{equation}
 where $a_{ij}\in k[\z]$ and $\vec{v}_i\in k[\z]^{1\times m}$. However, $\vec{v}_i$ may be a nonzero vector since $\{\vec{h}_1,\ldots,\vec{h}_r\}$ is not a \gr basis. Hence, we first need to compute a \gr basis $\{\vec{g}_1,\ldots,\vec{g}_s\}$ of $\langle \vec{h}_1,\ldots,\vec{h}_r \rangle$, where $s\geq r$. In the calculation process, we record the relationship between $\{\vec{h}_1,\ldots,\vec{h}_r\}$ and $\{\vec{g}_1,\ldots,\vec{g}_s\}$. That is,
 \begin{equation}\label{DFT-equ-2}
   \vec{g}_i = p_{i1}\vec{h}_1 + \cdots + p_{ir}\vec{h}_r, ~ i=1,\ldots,s.
 \end{equation}
 Then, we use $\{\vec{g}_1,\ldots,\vec{g}_s\}$ to reduce $\vec{f}_i$ and get
 \begin{equation}\label{DFT-equ-3}
   \vec{f}_i = q_{i1}\vec{g}_1 + \cdots + q_{is}\vec{g}_s, ~ i=1,\ldots,r.
 \end{equation}
 Let $\mathbf{P} =$ \begin{footnotesize}$\begin{bmatrix}
     p_{11}   &   \cdots   & p_{1r}   \\
     \vdots   &   \ddots   & \vdots   \\
     p_{s1}   &   \cdots   & p_{sr}
   \end{bmatrix}$\end{footnotesize} and $\mathbf{Q} =$ \begin{footnotesize}$\begin{bmatrix}
     q_{11}   &   \cdots   & q_{1s}   \\
     \vdots   &   \ddots   & \vdots   \\
     q_{r1}   &   \cdots   & q_{rs}
   \end{bmatrix}$\end{footnotesize}. Combining Equation (\ref{DFT-equ-2}) and Equation (\ref{DFT-equ-3}), we have
 \begin{equation}\label{DFT-equ-4}
  \F_1 = \G_{11}\F_{11} = (\mathbf{Q}\mathbf{P})\F_{11}.
 \end{equation}

 \cite{Lu2020Solving} designed a Maple package, which is called poly-matrix-equation, for solving multivariate polynomial matrix Diophantine equations. We can use this package to implement the above calculation process.

 Now, we can propose a new constructive algorithm to compute MLP factorizations of polynomial matrices without full row rank.

\begin{algorithm}[!ht]
 \DontPrintSemicolon
 \SetAlgoSkip{}
 \LinesNumbered
 \SetKwInOut{Input}{Input}
 \SetKwInOut{Output}{Output}

 \Input{$\mathbf{F}\in k[\z]^{l\times m}$.}

 \Output{an MLP factorization of $\mathbf{F}$.}

 \Begin{

  compute the rank $r$ of $\F$;

  perform elementary row transformations on $\F$, such that the first $r$ rows of $\bar{\F}$ are $k[\z]$-linearly independent, where $\bar{\F} = \mathbf{U}\F$ and $\mathbf{U}\in k^{l\times l}$ is an elementary transformation matrix;

  compute $d_r(\F_1)$, where $\F_1$ is composed of the first $r$ rows of $\bar{\F}$;

  compute a \gr basis $\{ [\vec{g}_1,\vec{f}_1],\ldots, [\vec{g}_s,\vec{f}_s]\}$ of ${\rm Syz}([\F_1^{\rm T} ~ -d_r(\F_1)\cdot \mathbf{I}_{m\times m}]^{\rm T})$;

  compute a \gr basis $\{\vec{h}_1,\ldots,\vec{h}_t\}$ of ${\rm Syz}(\F'_1)$ and use it to constitute $\mathbf{H}\in k[\z]^{t\times s}$, where $\F'_1\in k[\z]^{s\times m}$ is composed of $\{\vec{f}_1,\ldots,\vec{f}_s\}$;

  compute a \gr basis $\mathcal{G}$ of $I_{s-r}(\mathbf{H})$;

  \If{$\mathcal{G} \neq \{1\}$}
  {
    {\bf return} $\F$ has no MLP factorizations.
  }

  use the QUILLENSUSLIN package to compute a free basis of $\rho(\F'_1)$ and use it to make up $\mathbf{F}_{11}\in k[\z]^{r\times m}$;

  use the poly-matrix-equation package to compute $\G_{11}\in k[\z]^{r\times r}$ such that $\F_1 = \G_{11}\F_{11}$;

  perform elementary column transformations on $\F_{11}$, such that the first $r$ columns of $\bar{\F}_{11}$ are $k[\z]$-linearly independent, where $\bar{\F}_{11} = \F_{11}\mathbf{V}$ and $\mathbf{V}\in k^{m\times m}$ is an elementary transformation matrix;

  compute $\mathbf{C}\mathbf{V}$, where $\mathbf{C}\in k[\z]^{(l-r)\times m}$ is the last $(l-r)$ rows of $\bar{\F}$;

  compute $\mathbf{A}_{21}\mathbf{A}_{11}^{-1}$, where $\mathbf{A}_{11}$ is composed of the first $r$ columns of $\bar{\F}_{11}$, and $\mathbf{A}_{21}$ is composed of the first $r$ columns of $\mathbf{C}\mathbf{V}$;

  {\bf return} ($\mathbf{U}^{-1}\begin{bmatrix}\G_{11} \\ \mathbf{A}_{21}\mathbf{A}_{11}^{-1}\end{bmatrix}, \F_{11}$).
 }
 \caption{MLP factorizations}
 \label{MLP_Algorithm}
 \end{algorithm}

 From Algorithm \ref{MLP_Algorithm} we have $\rho(\F'_1) = \rho(\F_1):d_r(\F_1)$ in step 6 and $I_{s-r}(\mathbf{H}) = Fitt_r(\rho(\F'_1))$ in step 7. Moreover, $\mathcal{G} \neq \{1\}$ in step 8 implies that $\rho(\F_1):d_r(\F_1)$ is not a free module of rank $r$. If $s=r$ in step 6, then $\F'_1$ is a full row rank matrix. It follows that $\rho(\F_1):d_r(\F_1)$ is a free module of rank $r$ and the rows of $\F'_1$ constitute a free basis of $\rho(\F_1):d_r(\F_1)$. In this case, we do not need to compute a \gr basis of ${\rm Syz}(\F'_1)$ and perform the calculation from step 11.

 We use the two examples in \cite{Guan2019} to illustrate the calculation process of Algorithm \ref{MLP_Algorithm}.

 \begin{example}\label{example-1}
  {\rm Let
  \[\mathbf{F} =
  \begin{bmatrix}
       z_1^2z_2+z_1^2   & z_1  & 0    \\
       z_1z_3^2-z_1z_3  & 0  & z_2z_3-z_2+z_3-1 \\
       2z_1^2z_2z_3-z_1^2z_2+z_1^2z_3-z_1^2    &  z_1z_3-z_1  & z_1z_2^2+z_1z_2
   \end{bmatrix}\]
  be a multivariate polynomial matrix in $\mathbb{C}[z_1,z_2,z_3]^{3\times 3}$, where $z_1>z_2>z_3$ and $\mathbb{C}$ is the complex field.

  It is easy to compute that the rank of $\F$ is $2$, and the first $2$ rows of $\F$ are $\mathbb{C}[z_1,z_2,z_3]$-linearly independent. Let $\F_1 \in \mathbb{C}[z_1,z_2,z_3]^{2\times 3}$ be composed of the first $2$ rows of $\F$, then $d_2(\F_1) = z_1z_3-z_1$. We compute a \gr basis of ${\rm Syz}([\F_1^{\rm T} ~ -d_2(\F_1)\cdot \mathbf{I}_{3\times 3}]^{\rm T})$ and obtain
  \[\{[0, ~ z_1, ~ z_1z_3, ~ 0, ~ z_2+1],~[z_3-1, ~ 0, ~ z_1z_2+z_1, ~ 1, ~ 0]\}.\]
  Now, we get a system of generators of $\rho(\F_1):d_2(\F_1)$ as follows
  \[\{[z_1z_3, ~ 0, ~ z_2+1],~[z_1z_2+z_1, ~ 1, ~ 0]\}.\]
  Let
  \[\mathbf{F}_1' =
  \begin{bmatrix}
   z_1z_3   &      0     &   z_2+1       \\
   z_1z_2+z_1 & 1 & 0
   \end{bmatrix}.\]
  Since ${\rm rank}(\mathbf{F}_1') = 2$, $\mathbf{F}_1'$ is a full row rank matrix. Then, $\rho(\F_1):d_2(\F_1)$ is a free module of rank $2$, and the rows of $\mathbf{F}_1'$ constitute a free basis of $\rho(\F_1):d_2(\F_1)$. Let $\F_{11} = \F_1'$, we use the poly-matrix-equation package to compute $\G_{11}\in \mathbb{C}[z_1,z_2,z_3]^{2\times 2}$ such that $\F_1 = \G_{11}\F_{11}$, and obtain
  \[\mathbf{G}_{11} =
  \begin{bmatrix}
   0   &     z_1         \\
   z_3-1 &   0
   \end{bmatrix}.\]
  Note that the first $2$ columns of $\mathbf{F}_{11}$ are $\mathbb{C}[z_1,z_2,z_3]$-linearly independent. Let
  \[\mathbf{A}_{11}=
    \begin{bmatrix}
     z_1z_3   &      0  \\
     z_1z_2+z_1 & 1
   \end{bmatrix} \text{ and }
   \mathbf{A}_{21} = \begin{bmatrix}
    2z_1^2z_2z_3-z_1^2z_2+z_1^2z_3-z_1^2    &  z_1z_3-z_1
   \end{bmatrix},\]
   then
   \[\mathbf{A}_{21}\mathbf{A}_{11}^{-1}=
    \begin{bmatrix}
    z_1z_2   &  z_1z_3-z_1
   \end{bmatrix}.\]
  Therefore, $\F$ has an MLP factorization:
  \[\F = \begin{bmatrix}\mathbf{G}_{11} \\  \mathbf{A}_{21}\mathbf{A}_{11}^{-1}\end{bmatrix}\F_{11} =
    \begin{bmatrix}
     0   &     z_1         \\
     z_3-1 &   0    \\
     z_1z_2   &  z_1z_3-z_1
   \end{bmatrix}
   \begin{bmatrix}
   z_1z_3   &      0     &   z_2+1       \\
   z_1z_2+z_1 & 1 & 0
   \end{bmatrix}.\]}
 \end{example}

 \begin{example}\label{example-2}
  {\rm Let
  \[\mathbf{F} =
  \begin{bmatrix}
       z_1z_2+z_1-z_2-1   & 0  & z_3    \\
       z_2+1  & z_2+1  & z_1-1 \\
       z_1z_2+z_1    &  z_2+1 & z_1+z_3-1
   \end{bmatrix}\]
  be a multivariate polynomial matrix in $\mathbb{C}[z_1,z_2,z_3]^{3\times 3}$, where $z_1>z_2>z_3$ and $\mathbb{C}$ is the complex field.

  It is easy to compute that the rank of $\F$ is $2$, and the first $2$ rows of $\F$ are $\mathbb{C}[z_1,z_2,z_3]$-linearly independent. Let $\F_1 \in \mathbb{C}[z_1,z_2,z_3]^{2\times 3}$ be composed of the first $2$ rows of $\F$, then $d_2(\F_1) = z_2+1$. We compute a \gr basis of ${\rm Syz}([\F_1^{\rm T} ~ -d_2(\F_1)\cdot \mathbf{I}_{3\times 3}]^{\rm T})$ and obtain a system of generators of $\rho(\F_1):d_2(\F_1)$ as follows
  \[\{[z_2+1, ~ z_2+1, ~ z_1-1],~[z_1z_2+z_1-z_2-1, ~ 0, ~ z_3],~[z_1^2-2z_1-z_3+1, ~ -z_3, ~ 0]\}.\]
  Let
  \[\mathbf{F}_1' =
  \begin{bmatrix}
   z_2+1 & z_2+1 & z_1-1     \\
   z_1z_2+z_1-z_2-1 & 0 & z_3 \\
   z_1^2-2z_1-z_3+1 & -z_3 & 0
   \end{bmatrix},\]
  then a \gr basis of ${\rm Syz}(\F_1')$ is $\{[-z_3, ~ z_1-1, ~ -z_2-1]\}$. Let $\mathbf{H} =[-z_3 ~~~ z_1-1 ~~~ -z_2-1]$, then
  \[ Fitt_2(\rho(\mathbf{F}_1')) = I_1(\mathbf{H}) \neq \mathbb{C}[z_1,z_2,z_3].\]
  This implies that $\rho(\F_1):d_2(\F_1)$ is not a free module of rank $2$. Then, $\F$ has no MLP factorizations.}
 \end{example}

\section{Comparative Performance}\label{sec_edata}

 The above two examples show that Algorithm \ref{MLP_Algorithm} is simpler than the algorithm, which is called Guan's algorithm, proposed by \cite{Guan2019}. To illustrate the advantages of our algorithm, we first compare the main differences between the two algorithms.

 \begin{table}[H]
 \centering
 \vskip -15pt
 \caption{The comparison of two MLP factorization algorithms}
 \label{table_compare_algorithms}
 \vskip 3pt
 \begin{tabular}{cccccc}
  \Xhline{1.5pt}
    Main step &  Guan's algorithm & Algorithm \ref{MLP_Algorithm}  \\
  \Xhline{1pt}
    1 & $\rho(\F):I_r(\F)$  & $\rho(\F_1):d_r(\F_1)$  \\
    2 & $\F = \G_0\F_0$  &  $\F_1 = \G_{11}\F_{11}$ and $\mathbf{A}_{21}\mathbf{A}_{11}^{-1}$ \\
  \Xhline{1.5pt}
 \end{tabular}
  \vskip -8pt
 \end{table}

 The symbols in the above table are the same as those in Lemma \ref{minor-theorem-Guan} and Theorem \ref{main-theorem}. From Table \ref{table_compare_algorithms}, we can get the following preliminary conclusions: first, the calculation of the main step 1 of Algorithm \ref{MLP_Algorithm} is faster than that of Guan's algorithm in almost all cases; second, although in Algorithm \ref{MLP_Algorithm} we need to compute $\mathbf{A}_{21}\mathbf{A}_{11}^{-1}$ additionally, the scale of equation $\F_1 = \G_{11}\F_{11}$ is smaller than that of equation $\F = \G_0\F_0$.

 Next, we will show from the specific experimental data that Algorithm \ref{MLP_Algorithm} is more efficient than Guan's algorithm. The two algorithms have been implemented by us on the computer algebra system Maple. The implementations of the two algorithms have been tried on a number of examples including the two examples in Section \ref{sec_AE}. Please see the Appendix A for all examples. For interested readers, more comparative examples can be generated by the codes at: \url{http://www.mmrc.iss.ac.cn/~dwang/software.html}.

 \begin{table}[H]
 \centering
 \vskip -15pt
 \caption{Comparative performance of MLP factorization algorithms}
 \label{table_compare_time}
 \vskip 3pt
 \begin{tabular}{cccccc}
  \Xhline{1.5pt}
   Example &  \makecell{Guan's algorithm \\ $t_1$ (sec)} & \makecell{Algorithm \ref{MLP_Algorithm} \\ $t_2$ (sec)} & \makecell{Time comparison \\ $t_1/t_2$} \\
  \Xhline{1pt}
    $\F_1$ & 0.257  & 0.037 &  6.95 \\
    $\F_2$ & 0.263  & 0.044 &  5.98 \\
    $\F_3$ & 0.132  & 0.058 &  2.28 \\
    $\F_4$ & 0.407  & 0.063 &  6.46 \\
    $\F_5$ & 3.060  & 0.151 &  20.26 \\
    $\F_6$ & 4.275  & 0.283 &  15.11 \\
    $\F_7$ & 9.037  & 0.330 &  27.38 \\
    $\F_8$ & 17.306 & 0.549 &  31.52 \\
  \Xhline{1.5pt}
 \end{tabular}
 \vskip -8pt
 \end{table}

 In Table \ref{table_compare_time}, timings were obtained on an Intel(R) Xeon(R) CPU E7-4809 v2 @ 1.90GHz and 756GB of RAM, and each time is an average of 100 repetitions of the corresponding algorithm. As is evident from Table \ref{table_compare_time}, our algorithm performs better than Guan's algorithm, especially when the size of entries in matrices becomes larger and larger.

\section{Concluding Remarks}\label{sec_conclusions}

 We have given a new necessary and sufficient condition for the existence of MLP factorizations of multivariate polynomial matrices in this paper. All cases with matrices being full row rank and  non-full row rank are considered. Based on the new result, a constructive algorithm for computing MLP factorizations has been proposed. We have implemented Algorithm \ref{MLP_Algorithm} and Guan's algorithm on Maple, and the experimental data in Table \ref{table_compare_time} suggests that Algorithm \ref{MLP_Algorithm} is superior in practice in comparison with Guan's algorithm. This is due to the fact that we can determine whether $\F$ has an MLP factorization through less calculations and requires less time to calculate $\G_0$.

\section*{Acknowledgments}

 This research was supported in part by the CAS Key Project QYZDJ-SSW-SYS022.

\bibliographystyle{elsarticle-harv}

\bibliography{MLP_MF}

\begin{thebibliography}{24}
\expandafter\ifx\csname natexlab\endcsname\relax\def\natexlab#1{#1}\fi
\expandafter\ifx\csname url\endcsname\relax
  \def\url#1{\texttt{#1}}\fi
\expandafter\ifx\csname urlprefix\endcsname\relax\def\urlprefix{URL }\fi

\bibitem[{Bose(1982)}]{Bose1982}
Bose, N., 1982. Applied {M}ultidimensional {S}ystems {T}heory. Van Nostrand
  Reinhold, New York.

\bibitem[{Bose et~al.(2003)Bose, Buchberger, and Guiver}]{Bose2003}
Bose, N., Buchberger, B., Guiver, J., 2003. Multidimensional {S}ystems {T}heory
  and {A}pplications. Dordrecht, The Netherlands: Kluwer.

\bibitem[{Charoenlarpnopparut and
  Bose(1999)}]{Charoenlarpnopparut1999Multidimensional}
Charoenlarpnopparut, C., Bose, N., 1999. Multidimensional {FIR} filter bank
  design using \gr bases. IEEE Transactions on Circuits and Systems II: Analog.
  Digit. Signal Processing 46~(12), 1475--1486.

\bibitem[{Cox et~al.(2005)Cox, Little, and O'shea}]{Cox2005Using}
Cox, D., Little, J., O'shea, D., 2005. Using {A}lgebraic {G}eometry. Graduate
  Texts in Mathematics (Second Edition). Springer, New York.

\bibitem[{Decker et~al.(2016)Decker, Greuel, Pfister, and
  Schoenemann}]{DGPS2016}
Decker, W., Greuel, G.-M., Pfister, G., Schoenemann, H., 2016. {SINGULAR}
  4.0.3. a computer algebra system for polynomial computations.
  \url{https://www.singular.uni-kl.de/}.

\bibitem[{Eisenbud(2013)}]{Eisenbud2013}
Eisenbud, D., 2013. Commutative {A}lgebra: with a view toward algebraic
  geometry. New York: Springer.

\bibitem[{Fabia\'{n}ska and Quadrat(2007)}]{Fabianska2007Applications}
Fabia\'{n}ska, A., Quadrat, A., 2007. Applications of the {Q}uillen-{S}uslin
  theorem to multidimensional systems theory. In: Park, H., Regensburger, G.,
  (Eds.), \gr Bases in Control Theory and Signal Processing, Radon Series on
  Computational and Applied Mathematics 3, 23--106.

\bibitem[{Guan et~al.(2018)Guan, Li, and Ouyang}]{Guan2018}
Guan, J., Li, W., Ouyang, B., 2018. On rank factorizations and factor prime
  factorizations for multivariate polynomial matrices. Journal of Systems
  Science and Complexity 31~(6), 1647--1658.

\bibitem[{Guan et~al.(2019)Guan, Li, and Ouyang}]{Guan2019}
Guan, J., Li, W., Ouyang, B., 2019. On minor prime factorizations for
  multivariate polynomial matrices. Multidimensional Systems and Signal
  Processing 30, 493--502.

\bibitem[{Guiver and Bose(1982)}]{Guiver1982Polynomial}
Guiver, J., Bose, N., 1982. Polynomial matrix primitive factorization over
  arbitrary coefficient field and related results. IEEE Transactions on
  Circuits and Systems 29~(10), 649--657.

\bibitem[{Lin(1999)}]{Lin1999Notes}
Lin, Z., 1999. Notes on $n$-{D} polynomial matrix factorizations.
  Multidimensional Systems and Signal Processing 10~(4), 379--393.

\bibitem[{Lin and Bose(2001)}]{Lin2001A}
Lin, Z., Bose, N., 2001. A generalization of {S}erre's conjecture and some
  related issues. Linear Algebra and Its Applications 338~(1), 125--138.

\bibitem[{Lin et~al.(2008)Lin, Xu, and Bose}]{Lin2008ATutorial}
Lin, Z., Xu, L., Bose, N., 2008. A tutorial on \gr bases with applications in
  signals and systems. IEEE Transactions on Circuits and Systems I: Regular
  Papers 55~(1), 445--461.

\bibitem[{Liu and Wang(2013)}]{Liu2013New}
Liu, J., Wang, M., 2013. New results on multivariate polynomial matrix
  factorizations. Linear Algebra and Its Applications 438~(1), 87--95.

\bibitem[{Lu et~al.(2020)Lu, Wang, and Xiao}]{Lu2020Solving}
Lu, D., Wang, D., Xiao, F., 2020. poly-matrix-equation: a {M}aple package, for
  solving multivariate polynomial matrix {D}iophantine equations.
  \url{http://www.mmrc.iss.ac.cn/~dwang/software.html}.

\bibitem[{Matsumura and Reid(1989)}]{Matsumura1989C}
Matsumura, H., Reid, M., 1989. Commutative {R}ing {T}heory. Cambridge
  University Press.

\bibitem[{Mayr and Meyer(1982)}]{mayr82}
Mayr, E., Meyer, A., 1982. The complexity of the word problems for commutative
  semigroups and polynomial ideals. Advances in Mathematics 46~(3), 305--329.

\bibitem[{M\"{o}ller and Mora(1984)}]{mora1984}
M\"{o}ller, H., Mora, F., 1984. Upper and lower bounds for the degree of
  {G}r{\"o}bner bases. In: EUROSAM 1984, edited by J. Fitch, Lecture Notes in
  Computer Science 174, Springer-Verlag, New York. pp. 172--183.

\bibitem[{Morf et~al.(1977)Morf, Levy, and Kung}]{Morf1977New}
Morf, M., Levy, B., Kung, S., 1977. New results in $2$-{D} systems theory, part
  {I}: $2$-{D} polynomial matrices, factorization, and coprimeness. Proceedings
  of the IEEE 64~(6), 861--872.

\bibitem[{Pommaret(2001)}]{Pommaret2001Solving}
Pommaret, J., 2001. Solving {B}ose conjecture on linear multidimensional
  systems. In: European Control Conference. IEEE, Porto, Portugal, pp.
  1653--1655.

\bibitem[{Wang(2007)}]{Mingsheng2007On}
Wang, M., 2007. On factor prime factorization for $n$-{D} polynomial matrices.
  IEEE Transactions on Circuits and Systems 54~(6), 1398--1405.

\bibitem[{Wang and Feng(2004)}]{Wang2004On}
Wang, M., Feng, D., 2004. On {L}in-{B}ose problem. Linear Algebra and Its
  Applications 390~(1), 279--285.

\bibitem[{Wang and Kwong(2005)}]{Mingsheng2005On}
Wang, M., Kwong, C., 2005. On multivariate polynomial matrix factorization
  problems. Mathematics of Control, Signals, and Systems 17~(4), 297--311.

\bibitem[{Youla and Gnavi(1979)}]{Youla1979Notes}
Youla, D., Gnavi, G., 1979. Notes on $n$-dimensional system theory. IEEE
  Transactions on Circuits and Systems 26~(2), 105--111.

\end{thebibliography}

\appendix
\section{Examples for Table \ref{table_compare_time}}

 For all examples, the monomial orders used on $k[\z]$ and $k[\z]^{1\times m}$ are degree reverse lexicographic order and position over term, respectively. $k$ is the complex field $\mathbb{C}$, and $z_1>\cdots>z_n$.

 \begin{enumerate}
 \item $\F_1\in \mathbb{C}[z_1,z_2,z_3]^{3\times 3}$ is as follows, and it has no MLP factorizations.
   \begin{footnotesize}
    \[\mathbf{F}_1 =
     \begin{bmatrix}
       z_1z_2+z_1-z_2-1 & 0 & z_3   \\
       z_2+1 & z_2+1 & z_1-1 \\
       z_1z_2+z_1 & z_2+1 & z_1+z_3-1
     \end{bmatrix}.\]
    \end{footnotesize}

 \item $\F_2\in \mathbb{C}[z_1,z_2,z_3]^{3\times 3}$ is as follows, and it has no MLP factorizations.
   \begin{footnotesize}
    \[\mathbf{F}_2 =\begin{bmatrix}
       z_1z_2-z_2   & 0  & z_3+1    \\
       0  & z_1z_2-z_2  & z_1^2-2z_1+1 \\
       z_1^2z_2-z_1z_2    &  z_1z_2^2-z_2^2  & z_1^2z_2-2z_1z_2+z_1z_3+z_1+z_2
     \end{bmatrix}.\]
    \end{footnotesize}

 \item $\F_3\in \mathbb{C}[z_1,z_2,z_3]^{3\times 3}$ is as follows, and it has an MLP factorization.
   \begin{footnotesize}
    \[\mathbf{F}_3 =
     \begin{bmatrix}
       z_1z_2^2 & z_1z_3^2 & z_2^2z_3+z_3^3   \\
       z_1z_2 & 0 & z_2z_3 \\
       0 & z_1^2z_3 & z_1z_3^2
     \end{bmatrix}.\]
   \end{footnotesize}

 \item $\F_4\in \mathbb{C}[z_1,z_2,z_3]^{3\times 3}$ is as follows, and it has an MLP factorization.
    \begin{footnotesize}
    \[\mathbf{F}_4 =
      \begin{bmatrix}
       z_1^2z_2+z_1^2 & z_1 & 0  \\
       z_1z_3^2-z_1z_3 & 0 & z_2z_3-z_2+z_3-1  \\
       2z_1^2z_2z_3-z_1^2z_2+z_1^2z_3-z_1^2 & z_1z_3-z_1
       & z_1z_2^2+z_1z_2
      \end{bmatrix}.\]
    \end{footnotesize}

 \item $\F_5\in \mathbb{C}[z_1,z_2,z_3]^{3\times 3}$ is as follows, and it has an MLP factorization.
 \begin{footnotesize}
  \begin{equation*}
   \left\{
    \begin{array}{ll}
     \F_5[1,1]=z_1^2-z_1 , \\
     \F_5[1,2]=-z_2z_3+z_1-z_3 , \\
     \F_5[1,3]=z_1z_3-2z_1-z_3 , \\
     \F_5[2,1]=z_1^3z_2z_3-z_1^3z_3-z_1^2z_2z_3-z_1^2z_2+
               2z_1^2z_3+z_1^2+
               \\~~~~~~~~~~~~~~
               z_1z_2-z_1z_3-z_1 , \\
     \F_5[2,2]=-z_1^2z_2^2z_3-z_1z_2^2z_3^2-z_1^2z_2z_3-z_1z_2z_3^2+
               z_1z_2^2-2z_1^2z_3+
               \\~~~~~~~~~~~~~~
               z_2^2z_3-z_2z_3^2+z_1z_2+z_1z_3+z_2z_3-z_3^2+2z_1 ,\\
     \F_5[2,3]=z_1^2z_2z_3^2-3z_1^2z_2z_3-z_1^2z_3^2-
               z_1z_2z_3^2+z_1^2z_3-z_1z_2z_3+
               \\~~~~~~~~~~~~~~
               z_1z_3^2+3z_1z_2-z_1z_3+z_2z_3-z_3^2-z_1 , \\
     \F_5[3,1]=z_1^2z_2^3-z_1^2z_2^2-z_1z_2^3+z_1z_2^2+z_1z_2-z_2 ,\\
     \F_5[3,2]=-z_1z_2^4-z_2^4z_3-z_1z_2^3-z_2^3z_3-
               2z_1z_2^2+z_2^3+2z_2^2+2z_2,\\
     \F_5[3,3]=z_1z_2^3z_3-3z_1z_2^3-z_1z_2^2z_3-z_2^3z_3+
               z_1z_2^2+z_2^2+z_2z_3-z_2.
    \end{array}
   \right.
  \end{equation*}
 \end{footnotesize}

 \item $\F_6\in \mathbb{C}[z_1,z_2,z_3]^{3\times 3}$ is as follows, and it has an MLP factorization.
 \begin{footnotesize}
  \begin{equation*}
   \left\{
    \begin{array}{ll}
     \F_6[1,1]=z_1^3z_2^2+2z_1^3z_2+z_1^3-z_1z_2^2-z_1z_2z_3-z_1z_2-
               z_1z_3+
               \\~~~~~~~~~~~~~~~
               z_2z_3-z_2+z_3-1 , \\
     \F_6[1,2]=-z_1z_2^3z_3+z_1^2z_2^2-3z_1z_2^2z_3-2z_2^3z_3+2z_1^2z_2+
               z_1z_2^2+
               \\~~~~~~~~~~~~~~~
               z_2^3-3z_1z_2z_3-6z_2^2z_3+z_1^2+2z_1z_2+3z_2^2-z_1z_3-
               \\~~~~~~~~~~~~~~~
               7z_2z_3+z_1+4z_2-3z_3+2 , \\
     \F_6[1,3]=z_1^2z_2^2z_3-2z_1^2z_2^2+2z_1^2z_2z_3-4z_1^2z_2-
               2z_1z_2^2+z_1^2z_3-2z_2^2z_3-
               \\~~~~~~~~~~~~~~~
               z_2z_3^2-2z_1^2-4z_1z_2+z_2^2-z_2z_3-z_3^2-2z_1+z_3-1,\\
     \F_6[2,1]=z_1^2z_3-z_1^2+2z_1z_2-z_1z_3-2z_2+1 , \\
     \F_6[2,2]=z_1z_2^2+2z_2^3-z_2z_3^2+2z_1z_2+3z_2^2+z_1z_3+2z_2z_3-
               \\~~~~~~~~~~~~~~~
               z_3^2+2z_2+2z_3-2 , \\
     \F_6[2,3]=z_1z_3^2+z_1z_2+2z_2^2-3z_1z_3+2z_2z_3-
               z_3^2+3z_1-3z_2+z_3+1 , \\
     \F_6[3,1]=z_1^2z_2z_3+z_1^2z_2+2z_1^2z_3-z_1z_2z_3+
               2z_1^2-z_1z_2-2z_1z_3-2z_1 , \\
     \F_6[3,2]=-z_2^2z_3^2+z_1z_2z_3-z_2^2z_3-3z_2z_3^2+
               z_1z_2+2z_1z_3-
               \\~~~~~~~~~~~~~~~
               3z_2z_3-2z_3^2+2z_1-2z_3 , \\
     \F_6[3,3]=z_1z_2z_3^2-z_1z_2z_3+2z_1z_3^2-z_2z_3^2-2z_1z_2-2z_1z_3-
               \\~~~~~~~~~~~~~~~
               z_2z_3-2z_3^2-4z_1-2z_3 .
    \end{array}
   \right.
  \end{equation*}
 \end{footnotesize}

 \item $\F_7\in \mathbb{C}[z_1,z_2,z_3]^{3\times 4}$ is as follows, and it has an MLP factorization.
 \begin{footnotesize}
  \begin{equation*}
   \left\{
    \begin{array}{ll}
     \F_7[1,1]=2z_1^3z_2^3z_3^2-z_1^3z_2^3z_3+2z_1^4z_3^3-z_1^4z_3^2+
               z_1^3z_2z_3^2+z_1^3z_3^3+
               \\~~~~~~~~~~~~~~
               3z_1z_2^3z_3+ 2z_1^3z_3^2+z_1z_2^2z_3^2+2z_1z_2^2z_3+
               4z_1^2z_3^2-
               \\~~~~~~~~~~~~~~
               z_1^2z_3+z_1z_2z_3+z_1z_3^2+2z_1z_3+2z_3  , \\
     \F_7[1,2]=2z_1z_2^3z_3^2-z_1z_2^3z_3+2z_1^2z_3^3-
               z_1^2z_3^2+2z_3^2-z_3 , \\
     \F_7[1,3]=z_1^3z_3^2+z_1z_2^2z_3+z_1z_3 , \\
     \F_7[1,4]=2z_1z_2^3z_3^2+z_1^3z_3^3-z_1z_2^3z_3+z_1z_2^2z_3^2+
               2z_1^2z_3^3-
               \\~~~~~~~~~~~~~~
               z_1^2z_3^2+z_1z_3^2+2z_3^2-z_3 , \\
     \F_7[2,1]=-2z_1^2z_3^2+z_1^2z_3-z_1z_2z_3-z_1z_3^2-
               2z_1z_3-2z_3,\\
     \F_7[2,2]=-2z_3^2+z_3 , \\
     \F_7[2,3]=-z_1z_3 , \\
     \F_7[2,4]=-z_1z_3^2-2z_3^2+z_3 , \\
     \F_7[3,1]=-2z_1^4z_2^3z_3+z_1^4z_2^3-2z_1^5z_3^2+z_1^5z_3-
               z_1^4z_2z_3-z_1^4z_3^2-
               \\~~~~~~~~~~~~~~
               3z_1^2z_2^3-2z_1^4z_3-z_1^2z_2^2z_3-2z_1^2z_2^2-
               4z_1^3z_3-
               2z_1^2z_2z_3+
               \\~~~~~~~~~~~~~~
               z_1^3-z_1^2z_3-2z_1^2-2z_1-3z_2-z_3-2 , \\
     \F_7[3,2]=-2z_1^2z_2^3z_3+z_1^2z_2^3-2z_1^3z_3^2+z_1^3z_3-
               \\~~~~~~~~~~~~~~
               2z_1z_3-2z_2z_3+z_1+z_2 , \\
     \F_7[3,3]=-z_1^4z_3-z_1^2z_2^2-z_1^2-1 , \\
     \F_7[3,4]=-2z_1^2z_2^3z_3-z_1^4z_3^2+z_1^2z_2^3-z_1^2z_2^2z_3-
               2z_1^3z_3^2+z_1^3z_3-
               \\~~~~~~~~~~~~~~
               z_1^2z_3-2z_1z_3-2z_2z_3+z_1+z_2-z_3.
    \end{array}
   \right.
  \end{equation*}
 \end{footnotesize}

 \item $\F_8\in \mathbb{C}[z_1,z_2,z_3]^{3\times 3}$ is as follows, and it has an MLP factorization.
 \begin{footnotesize}
  \begin{equation*}
   \left\{
    \begin{array}{ll}
     \F_8[1,1]=z_1^3z_3^2-z_1^3z_3+z_1^2z_2z_3-z_1^2z_3^2+z_2z_3^3+
               z_3^4-z_1^2z_2-
               \\~~~~~~~~~~~~~~
               z_1z_2z_3-z_2z_3^2-2z_3^3+z_1^2+z_1z_2+z_1z_3+z_2z_3+
               \\~~~~~~~~~~~~~~
               2z_3^2-z_1-z_2-2z_3+1,\\
     \F_8[1,2]=z_1^4z_2z_3+z_1^3z_2^2-z_1^3z_2z_3+z_1z_2^2z_3^2+
               z_1z_2z_3^3-z_1^3z_2-z_1^2z_2^2+
               \\~~~~~~~~~~~~~~
               z_1^3z_3-z_1z_2z_3^2+2z_1^2z_2+z_1z_2^2-z_1^2z_3+
               z_1z_2z_3+z_2z_3^2+
               \\~~~~~~~~~~~~~~
               z_3^3-z_1^2-2z_1z_2-z_3^2+z_1+z_2+z_3-1 , \\
     \F_8[1,3]=2z_1^3z_2-z_1^3-2z_1^2z_2+z_2^2z_3+z_2z_3^2+z_1^2+
               \\~~~~~~~~~~~~~~
               z_1z_2+z_1z_3-z_2z_3-z_1 , \\
     \F_8[2,1]=-z_1z_2z_3^3+z_1^2z_3^2+z_1z_2z_3^2-z_2^2z_3^2-
               z_1^2z_3+z_1z_2z_3+z_2^2z_3-
               \\~~~~~~~~~~~~~~
               z_3^3-z_1z_2+2z_3^2-2z_3+1 , \\
     \F_8[2,2]=-z_1^2z_2^2z_3^2+z_1^3z_2z_3-z_1z_2^3z_3+z_1^2z_2^2-
               2z_1z_2z_3^2+z_1^2z_3+
               \\~~~~~~~~~~~~~~
               z_1z_2z_3-z_2^2z_3-z_3^2+z_3-1 ,\\
     \F_8[2,3]=-2z_1z_2^2z_3+2z_1^2z_2+z_1z_3-z_2z_3-z_1 , \\
     \F_8[3,1]=z_1z_2^2z_3^2+z_1z_3^4-z_1z_2^2z_3+z_2^3z_3+
               2z_1z_2z_3^2-3z_1z_3^3-
               \\~~~~~~~~~~~~~~
               z_3^4-z_2^3-2z_1z_2z_3+z_2^2z_3+5z_1z_3^2+z_3^3-z_2^2-
               \\~~~~~~~~~~~~~~
               5z_1z_3-z_3^2+2z_1-z_3+2 , \\
     \F_8[3,2]=z_1^2z_2^3z_3+z_1^2z_2z_3^3+z_1z_2^4+2z_1^2z_2^2z_3-
               2z_1^2z_2z_3^2-z_1z_2z_3^3+
               \\~~~~~~~~~~~~~~
               z_1z_2^3+3z_1^2z_2z_3+z_1z_2^2z_3+z_1z_3^3-z_2^3+
               z_1z_2z_3-2z_1z_3^2-
               \\~~~~~~~~~~~~~~
               2z_1^2z_2+z_3^3-2z_1z_2+z_2^2+3z_1z_3-2z_1-z_3-2 , \\
     \F_8[3,3]=2z_1z_2^3+z_1z_2z_3^2+3z_1z_2^2+z_1^2z_3-2z_1z_2z_3-
               \\~~~~~~~~~~~~~~
               z_2z_3^2-2z_1^2+2z_1z_2-z_1z_3-2z_1 .
    \end{array}
   \right.
  \end{equation*}
 \end{footnotesize}
 \end{enumerate}

\end{document}